\newtheorem{theorem}{Theorem}%  meant for continuous numbers
\newtheorem{proposition}[theorem]{Proposition}% 
\newtheorem{collary}[theorem]{Collary}% 
\newtheorem{lemma}[theorem]{Lemma}% 
\newtheorem{remark}{Remark}%
\newtheorem{definition}{Definition}%
\begin{document}
	
	\title[The Radical Solution and Computational Complexity]{The Radical Solution and Computational Complexity}
	
	%%=============================================================%%
	%% GivenName	-> \fnm{Joergen W.}
	%% Particle	-> \spfx{van der} -> surname prefix
	%% FamilyName	-> \sur{Ploeg}
	%% Suffix	-> \sfx{IV}
	%% \author*[1,2]{\fnm{Joergen W.} \spfx{van der} \sur{Ploeg} 
		%%  \sfx{IV}}\email{iauthor@gmail.com}
	%%=============================================================%%
	
	\author*[1]{\fnm{Bojin} \sur{Zheng}}\email{zhengbj@mail.scuec.edu.cn}
	
	\author[2]{\fnm{Weiwu} \sur{Wang}}\email{hiwangww@163.com}
%%	\equalcont{These authors contributed equally to this work.}
	
%%	\author[1,2]{\fnm{Third} \sur{Author}}\email{iiiauthor@gmail.com}
%%	\equalcont{These authors contributed equally to this work.}
	
	\affil*[1]{\orgdiv{Department}, \orgname{South-Central Minzu University}, \orgaddress{\street{182 Minzu Road, Hongshan District}, \city{Wuhan}, \postcode{430074}, \state{Hubei}, \country{China}}}
	
	\affil[2]{\orgdiv{Department of Strategy}, \orgname{China Development Bank}, \orgaddress{\street{18 Fuxing Inner Street, West District}, \city{Beijing}, \postcode{100080}, \state{Beijing}, \country{China}}}
	
	%%==================================%%
	%% Sample for unstructured abstract %%
	%%==================================%%

%     If you need symbols beyond the basic set, uncomment this command.
%\usepackage{amssymb}

%     If your article includes graphics, uncomment this command.
%\usepackage{graphicx}

%     If the article includes commutative diagrams, ...
%\usepackage[cmtip,all]{xy}

%     Update the information and uncomment if AMS is not the copyright
%     holder.
%\copyrightinfo{2009}{American Mathematical Society}

\abstract{The radical solution of polynomials with rational coefficients is a famous solved problem. This paper found that it is a $\mathbb{NP}$ problem. Furthermore, this paper found that arbitrary  $ \mathscr{P} \in \mathbb{P}$ shall have a one-way running graph $G$, and have a corresponding $\mathscr{Q} \in \mathbb{NP}$ which have a two-way running graph $G'$, $G$ and $G'$ is isomorphic, i.e., $G'$ is combined by $G$ and its reverse $G^{-1}$.  When $\mathscr{P}$ is an algorithm for solving polynomials,  $G^{-1}$ is the radical formula. According to Galois' Theory, a general radical formula does not exist. Therefore, there exists an $\mathbb{NP}$, which does not have a general, deterministic  and polynomial time-complexity algorithm, i.e., $\mathbb{P} \neq \mathbb{NP}$. Moreover, this paper pointed out that this theorem actually is an impossible trinity. }

\keywords{$\mathbb{P}$ vs $\mathbb{NP}$, Radical Solution, Running Graph, Impossible Trinity}

\maketitle

\section{Introduction}\label{sec1}

Now the $\mathbb{P}\ vs\ \mathbb{NP}$ problem or the $\mathbb{P} \overset{\text{? }}{=} \mathbb{NP}$ is the most important problem in computer science. It was proposed by Stephen Cook in 1971 when he published his paper \textit{The complexity of theorem-proving procedures} \cite{Cook71thecomplexity}. In 2000, the question was selected as one of the seven Millennium Prize Questions. For decades, a large number of researchers have developed various technologies to try to solve this problem\cite{luo2015en}, but have not yet solved it completely.

Colloquially, $\mathbb{P}$ refers to the set of problems solvable by a Deterministic Turing machine ($DTM$ or $TM$) in polynomial time, and $\mathbb{NP}$ refers to the set of problems verifiable by a Deterministic Turing machine in polynomial time. $\mathbb{P}\ vs\ \mathbb{NP}$ problem or $\mathbb{P} \overset{\text{? }}{=} \mathbb{NP}$ problem means: whether any problem that can be efficiently verified  can be efficiently solved or not. Cook proposed $NP-$completeness \cite{Cook71thecomplexity} and described the most difficult $NP-$complete problems in $\mathbb{NP}$. $NP-$complete problems can be reduced to $3SAT$ problem, that is, if one $NP-$complete problem can be solved efficiently, then all $NP-$complete problems can be solved efficiently. At present, thousands of  $NP-$complete problems have been found in practice, covering dozens of disciplines including biology, chemistry, economy etc. To solve $\mathbb{P} \overset{\text{? }}{=} \mathbb{NP}$ problem has not only important practical significance, but also important scientific and philosophical significance.

The definition of $\mathbb{NP}$ bases upon the Non-Deterministic Turing machine ($NDTM$), which can be regarded as a general, foreseen, polynomial time-complexity algorithm. Because $\mathbb{P} \subseteq \mathbb{NP}$, $\mathbb{P} \neq \mathbb{NP}$ actually means $\mathbb{P} \subset \mathbb{NP}$, or say, there is at least one element in $\mathbb{NP}$ which does not belong to $\mathbb{P}$, i.e., which can not be resolved by a general, unforeseen (or deterministic), polynomial time-complexity algorithm.

Fundamentally, $\mathbb{NP}$ can be solved by enumerating all the potential solutions to verify them, that is, can be solved by exponential time-complexity algorithms. $\mathbb{NP} \subseteq EXP$. Every problem in $\mathbb{NP}$ can be solved by a general, unforeseen (or deterministic), exponential time-complexity algorithm.
 
This paper found that the solution of roots of the polynomials with rational coefficients is an example of $\mathbb{P} \neq \mathbb{NP}$.

The radical roots of the polynomials is a famous solved problem\cite{Birkhoff1994}. Based on the work of Galois, we have known that there is no radical roots for higher order polynomial equations. The aim of this paper is to reduce the $\mathbb{P} \ vs \ \mathbb{NP}$ problem, i.e., the non-existence of $\mathbb{P}$ algorithm, to the non-existence of the solution of radical roots.  

When we have known a root of any polynomial, we can validate it through computing the value of the polynomial. Obviously, the time-complexity is polynomial. This fact illustrates that the root problem is a $\mathbb{NP}$.

According to the relationship between $\mathbb{NP}$ and $EXP$, we can see that an enumerating algorithm can search the whole set of algebraic numbers and find the roots, especially, because the number of roots is the same as the degree of polynomials, the whole set of roots can be enumerated with an extra cost of times of polynomial time-complexity. 

Noticed that there exist some algorithms which can solve some special cases, for examples, $x^n -2=0$; and when the degree is smaller than and equivalent to $4$. When we aim to prove $\mathbb{P} \neq \mathbb{NP}$, actually, it is a need to prove an impossible trinity among generality, determinism, and the polynomial time-complexity. This fact implies a secretive relationship between $\mathbb{P} \neq \mathbb{NP}$ and No Free Lunch Theorems\cite{WolpMacr97}. We will not discuss this issue in this paper. 

In the next step, we can employ Galois' Theory to prove that there is no a general, deterministic algorithm to solve the polynomials with rational coefficients. We need to utilize the reversible property of $\mathbb{P}$. Arbitrary $ \mathscr{P} \in \mathbb{P}$ shall have a one-way running graph $G$, and have a corresponding $\mathscr{Q} \in \mathbb{NP}$ which have a two-way running graph $G'$, $G$ and $G'$ is isomorphic, i.e., $G'$ is combined by $G$ and its reverse $G^{-1}$.  When $\mathscr{P}$ is an algorithm for solving polynomials,  $G^{-1}$ is the radical formula. According to Galois' Theory, a general radical formula does not exist. Therefore, there exists a $\mathbb{NP}$, which does not have a general, deterministic and polynomial time-complexity algorithm, i.e., $\mathbb{P} \neq \mathbb{NP}$.

\section{Background}
\begin{definition}[Turing Machine]
	A Turing Machine $\mathscr{M}$ is a tuple $\left\langle \Sigma, \Gamma, Q, \delta \right\rangle $. $ \Sigma $ is a finite nonempty set with input symbols; $ \Gamma $ is a finite nonempty set, including a blank symbol $ b $ and $ \Sigma $; $ Q $ is a set of possible states, $ q_{0} $ is the initial state, $ q_{accept} $ is an accept state, $ q_{reject} $ is a reject state; $ \delta $ is a transition function, satisfying
	\[ \delta:(Q - \left\lbrace q_{accept}, q_{reject}\right\rbrace ) \times \Gamma \to Q \times \Gamma \times \left\lbrace -1, 1 \right\rbrace  \]
	if $ \delta(q,s)=(q',s',h) $, the interpretation is that, if $M$ is in state $q$ scanning the symbol $s$, then $q'$ is the new state, $s'$ is the symbol printed, and the tape head moves left or right one square depending on whether $h$ is $-1$ or $1$. $ C $ is a configuration of $\mathscr{M}$, $ C=xqy $, $ x,y \in \Gamma^{*} $,$ y $ is not empty, $ q \in Q $. The computation of $\mathscr{M}$ is the unique sequence $ C_1, C_2, \cdots $. If the computation is finite, then the number of steps is the number of configurations.		
\end{definition}

\begin{definition}[Non-Deterministic Turing Machine]
	A Non-Deterministic Turing Machine $\mathscr{N}$ is a tuple $ \left\langle \Sigma, \Gamma, Q,\Delta \right\rangle $. The difference between $ \mathscr{N} $ and $ \mathscr{M} $ lies in the transition function. There are a set of functions $ \Delta = \{ \delta_1 , \delta_2, \cdots, \delta_m\}$ in $ \mathscr{N} $. In any configuration $ C_i $, you can choose between the different functions. The choice is denoted as $ \Delta_{k,j} $, $ j \in \{1,2,\cdots, m\} $. The sequence of choices is a path of computation. If there exits a path $ \Delta_{1,j_1}, \Delta_{2,j_2}, \cdots, \Delta_{k,j_k}  $ leading to the accepting state, $\mathscr{N} $ halts  and accepts. The number of computation $k$ is the number of steps in the shortest path.
\end{definition}

\begin{definition}[The definition of $\mathbb{P}$.]
	
	Assume that $f: \mathbb{N} \xrightarrow[{\ddagger dtm \ddagger}]{} \{0,1\}$, $dtm \in DTM$, and the semantic meaning of $dtm$ is the function $f$, that is, $\ddagger dtm \ddagger=f$, $O(dtm) = O(n^c)$, here, $n$ is the length of input, and $c$ is a constant, $f \in \mathbb{P}$, and $dtm \in \mathbb{P}_f$. Here, $\mathbb{P}_f$ is the set of algorithms to solve the polynomial time-complexity problems $\mathbb{P}$.	
\end{definition}

\begin{definition}[The definition of $\mathbb{NP}$.]

	Assume that $f: \mathbb{N} \xrightarrow[{\ddagger ntm \ddagger}]{} \{0,1\} $, here, $ntm \in NDTM$,  the semantic meaning of $ntm$ is $f$, that is, $\ddagger ntm \ddagger=f$, $O(ntm \rightarrow dtm) = O(n^c)$ , $n$ is the length of input, $c$ is a constant, thus, $f \in \mathbb{NP}$, $ntm \in \mathbb{NP}_f$.		
\end{definition}

\section{Main Results}

\subsection{The Idle Algorithm}
The idle algorithm is an algorithm with idle instructions. 

An algorithm can spend exponential time to execute idle instructions. On one side, because exponential time has been spent, we can say this algorithm has exponential time-complexity. On the other side, because these exponential times of execution of instructions can be eliminated totally, therefore, this algorithm has no time-complexity. 

Of course, the idle  algorithm can also be discussed when spending polynomial time. 

Obviously, the idle algorithm can have polynomial time-complexity, and simultaneously, exponential time-complexity, and $O(1)$ time-complexity.  

It is easy to prove that the idle instructions are feasible to detect. 

If an algorithm has no idle instructions, then this algorithm is called no-idle algorithm.

\subsection{The Zero Algorithm}
The zero algorithm is an algorithm whose code length is zero. 

According to the definitions of $\mathbb{P}$ and $\mathbb{NP}$, the length of $dtm$ is not zero by default.  When length of $dtm$ is zero, we call it zero algorithm.

Because any algorithm can be revised to have the value zero when it halts, we assume that all algorithms will halt when its value is zero. Before it halts, it can output its original value.

If the code length of an algorithm is not zero, then we call this algorithm non-zero algorithm.

\subsection{Running graph}

\begin{definition}[Running graph of $DTM$.]
	Assume that $f: X \xrightarrow[{\ddagger dtm \ddagger}]{} \{0,1\}$, $dtm \in DTM$, $X \subseteq \mathbb{N} $ and the semantic meaning of $dtm$ is the function $f$, that is, $\ddagger dtm \ddagger=f$, let $X' =\{x| f(x) =0 \}$,  $f_k(X')$ is the state of the $k-th$ step of $dtm$. A graph $G= <V, E> $ is called running graph of $dtm$, if $V$ is the set of nodes and $E$ is the set of edges, and  $\forall v \in V ,v= f_k(x) , x \in X', k \in \mathbb{N}$, $ \forall e \in E,e = <f_k(x), f_{k+1}(x)>,  x\in X', k \in \mathbb{N}$.	
\end{definition}

\begin{definition}[Running graph of $Oracle-NDTM$.]
	Assume that a graph $G'= <V, E> $ is called running graph of $ntm$, define $ntm=f^{-1} \circ f : 0  \rightarrow X' \rightarrow 0, ntm \in NDTM$. Here, $G^{-1}=<V, E'>$, $\forall x \in X', \forall e' \in E', e' =<f_{k+1}(x), f_k(x)>$.
	 Here $f: X \xrightarrow[{\ddagger dtm \ddagger}]{} \{0,1\}$, $dtm \in DTM$, $X \subseteq \mathbb{N} $ and the semantic meaning of $dtm$ is the function $f$, that is, $\ddagger dtm \ddagger=f$, $X' =\{x| f(x) =0 \}$. Here, the $NDTM$ is called $Oracle-NDTM$
\end{definition}

\begin{theorem}\label{TM.reverse}
	Any non-zero no-idle total $DTM$ is reversible and has a corresponding  $Oracle-NDTM$.
\end{theorem}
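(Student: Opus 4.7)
The plan is to construct the Oracle-NDTM $\mathscr{N}$ directly from $\mathscr{M}$'s running graph $G$ by edge reversal, and then read off from $G^{-1}$ a valid non-deterministic transition family. First I would use the three hypotheses to pin down the shape of $G$. Since $\mathscr{M}$ is total, for every $x$ in its input domain the configuration sequence $f_0(x), f_1(x), \dots, f_{k_x}(x)$ is finite; restricting to $x \in X' = \{x \mid f(x)=0\}$, each such path terminates at a halting configuration, which I identify via the zero-algorithm convention of the previous subsection with the single node $0$. The no-idle hypothesis guarantees that every edge of $G$ is a strict change of configuration, so $G$ has no self-loops, and the non-zero hypothesis guarantees $G$ is nonempty. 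Hence $G$ is a finite-out-degree, finite-path DAG whose unique sink is $0$.

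Next I would form $G^{-1}$ by reversing every edge. The key observation driving the theorem is that the forward transition $\delta$ is a function, but its graph-theoretic inverse is in general not a function: several predecessor configurations can collapse onto a common successor, because writing a tape symbol and moving the head forgets what was previously under the head. This loss of information is precisely what turns the reverse into a non-deterministic transition relation, and it is the reason the reverse machine must be an NDTM rather than a DTM. Concretely, for each configuration $c$ appearing in $G$, the predecessor set $\{c_1,\dots,c_{m_c}\}$ is finite --- each $c_i$ can differ from $c$ only in one of the finitely many symbols of $\Gamma$ and in an adjacent head position --- so I can take each of the maps $\delta_j : c \mapsto c_j$ as a primitive reverse transition and assemble them into the family $\Delta = \{\delta_1, \delta_2, \dots\}$ required by the definition of $NDTM$.

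Finally I would verify that $\mathscr{N} = \langle \Sigma, \Gamma, Q, \Delta\rangle$, started at $0$ and accepting upon reaching any $x \in X'$, has running graph exactly $G^{-1}$, and that composing one backward walk of $\mathscr{N}$ with one forward pass of $\mathscr{M}$ as oracle yields the required $ntm = f^{-1}\circ f : 0 \to X' \to 0$. The backward guess picks some $x$; the oracle pass confirms $f(x)=0$. The main obstacle I expect is this last step: showing that the purely graph-theoretic reversal is genuinely realisable as the NDTM transition family $\Delta$ of Definition 2, not merely as an abstract binary relation on configurations. One must argue that undoing a head-move and a tape-write is always expressible as one of finitely many primitive tuples in $Q \times \Gamma \times \{-1, 1\}$, and one must confirm that each of the three hypotheses is indispensable: without totality some forward paths are infinite so the sink $0$ is never reached and $G^{-1}$ has no starting node; without no-idle, idle self-loops survive reversal and stall $\mathscr{N}$; without non-zero there is no $G$ to invert at all.
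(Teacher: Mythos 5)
Your proposal takes essentially the same route as the paper: construct the running graph $G$ of the $DTM$, reverse its edges to obtain $G^{-1}$, and read off the $Oracle$-$NDTM$ that starts from $0$, walks the reversed paths back to the inputs in $X'$, and then verifies them with a forward pass. If anything you are more careful than the paper's own argument, since you explicitly address the one point it leaves implicit --- that the graph-theoretic reversal of the deterministic $\delta$ is a finite-branching relation realisable as the transition family $\Delta$ of an $NDTM$ --- and you account for why each of the three hypotheses (non-zero, no-idle, total) is needed.
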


\begin{proof}
	Any non-zero no-idle  total $DTM$ shall have a running graph $G$ of $DTM$. Because of the one-way execution of $DTM$, $G$ is a tree. According to the definition of running graph of $Oracle-NDTM$, there exists a corresponding running graph $G'$ of $Oracle-NDTM$. $G'$ is combined by $G^{-1}$ and $G$. This $Oracle-NDTM$ will start from $0$ and reverse the paths of $DTM$, finally obtain the subset of $\mathbb{N}$ as the input, and then verifies the solutions.  
\end{proof}

\begin{remark}
	Theorem \ref{TM.reverse} illustrated that all non-zero no-idle total $DTM$ is reversible and will have a $Oracle-NDTM$, that is, every computation (modifying the bit string) in $DTM$, i.e., the edge in the running graph of $DTM$, will have a reversible computation (recovering the bit string ) in $Oracle-NDTM$. The running graph of $DTM$ is isomorphic to the running graph of $Oracle-NDTM$ when treating the one-way edges two-way. The running graph of $NDTM$ is a spindle with the symmetry axis of the input.    
\end{remark}

\begin{collary} \label{col.tm.psize}
	Any non-zero no-idle  total $DTM$ with polynomial size of input is reversible and has a  $Oracle-NDTM$.
\end{collary}

\begin{collary} \label{col.pnp}
	Any non-zero no-idle total $\mathbb{P}$ algorithm is reversible and has a $\mathbb{NP}$ algorithm.
\end{collary}

\subsection{The algorithm or formula of polynomials}

\begin{definition}
	An exact algorithm (or an algorithm) of polynomials can only include addition, subtraction, multiplication, division and power function. 
\end{definition}

\begin{definition}
	A $\mathbb{NP}$ or $NEXP$ algorithm (or a $NDTM$ algorithm) of polynomials is a $NDTM$ which can only include addition, subtraction, multiplication, division, power function and taking the n-th root. 
\end{definition}

As to the other algorithms, we call them approximation algorithms, and we will not discuss them in this paper.

For the polynomials with rational coefficients, it is only include addition, subtraction, multiplication, division and power function, i.e., the running graph will only include these limited operators. However, the reversible $Oracle-NDTM$ should have all reversible operators. Therefore, the corresponding $Oracle-NDTM$ must include the operator of taking the $n$-th root.

According to the reversible theorem\ref{col.pnp} of $\mathbb{P}$ algorithms, when polynomials is $\mathbb{NP}$, if there exists a $\mathbb{P}$ algorithm,  there must exists a reversible $\mathbb{NP}$ algorithm whose operators are reversible. 

\begin{proposition}
	The existence of $\mathbb{P}$ algorithms leads to the existence of $\mathbb{NP}$ algorithms for solving the roots of the polynomials with rational coefficients. Here, the $\mathbb{NP}$ algorithm is the formula of the radical roots.
\end{proposition}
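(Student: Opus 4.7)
The plan is to read the proposition as a direct composition of two facts already established in the excerpt: (i) root-finding for rational-coefficient polynomials satisfies the $\mathbb{NP}$ definition, and (ii) by Corollary~\ref{col.pnp} any $\mathbb{P}$ algorithm for the problem reverses into an $Oracle\text{-}NDTM$ whose arithmetic alphabet is exactly that of a radical formula.

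First I would check that root-finding is in $\mathbb{NP}$: given a candidate algebraic number $\alpha$, plugging it into the polynomial uses only $+,-,\times,\div$ and integer powers, and a zero test can be performed in time polynomial in the size of the polynomial and of $\alpha$, which fits the $\mathbb{NP}$ definition already recorded. Then, assuming a $\mathbb{P}$ algorithm $\mathscr{P}$ for the problem exists in the exact-algorithm sense just defined, I may assume without loss that $\mathscr{P}$ is non-zero and no-idle, because the preceding subsections argue that idle instructions can be detected and deleted and that the zero case is trivial. Corollary~\ref{col.pnp} then yields an $Oracle\text{-}NDTM$ $\mathscr{Q}$ whose running graph is the spindle built from the forward graph $G$ of $\mathscr{P}$ and its reverse $G^{-1}$, and this $\mathscr{Q}$ computes the same input/output relation, i.e.\ it outputs roots of the input polynomial.

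Next I would inspect the labels on the edges of $G^{-1}$ and verify that they are exactly the operators allowed for an $\mathbb{NP}$ algorithm of polynomials. The reverse of addition is subtraction and vice versa; the reverse of multiplication is division and vice versa; so these four operators close under inversion and stay inside even the exact-algorithm class. The only remaining forward operator is the integer power, whose inverse is extraction of the $n$-th root. This is precisely the extra operator that distinguishes the $\mathbb{NP}$/$NDTM$ class of polynomial algorithms from the exact class in the definitions given above. Consequently every edge of the reversed graph is labelled by one of $+,-,\times,\div$ or $\sqrt[n]{\cdot}$, and the composition of these labels along the backward path from $0$ to the input yields an expression for the root in terms of the rational coefficients using exactly the radical operators: a radical formula.

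The main obstacle I expect is the multi-valuedness of the inverse power: $\sqrt[n]{\cdot}$ is not a function, so the inversion cannot live inside a deterministic machine but naturally inside a nondeterministic one, which is why the reversed object is an $Oracle\text{-}NDTM$ rather than a $DTM$. I would handle this by observing that the nondeterministic branching at each radical extraction is exactly what the $Oracle\text{-}NDTM$ model is designed to absorb, and that on the verification side only one branch needs to succeed, matching the $\mathbb{NP}$ acceptance condition. A secondary issue is ruling out hidden control-flow operators (comparisons, case splits) inside $\mathscr{P}$; since the definition of exact algorithm restricts $\mathscr{P}$ to the five listed arithmetic operators, such controls are either absent or, if implemented arithmetically, reverse into further nondeterministic branches already permitted by $\mathscr{Q}$. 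With these two points settled, the proposition follows.
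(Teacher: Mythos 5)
Your proposal is correct and takes essentially the same route as the paper: the paper's own justification (contained in the two paragraphs immediately preceding the proposition) is precisely the combination of Corollary~\ref{col.pnp} with the observation that the exact operators $+,-,\times,\div$ and the integer power reverse into $+,-,\times,\div$ and the $n$-th root, so the reversed running graph $G^{-1}$ is a radical formula. Your extra remarks on the non-zero/no-idle normalization and on the multi-valuedness of $\sqrt[n]{\cdot}$ merely make explicit points the paper leaves implicit.
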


\subsection{The Root of Polynomials and $\mathbb{NP}$ 	}
\begin{theorem}\label{thm.np}
	Solving Roots of polynomials over $\mathbb{Q}$ is a $\mathbb{NP}$.
\end{theorem}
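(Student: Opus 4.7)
The plan is to instantiate the paper's definition of $\mathbb{NP}$ directly: I build a non-deterministic Turing machine $ntm$ which, on input a polynomial $p(x)\in\mathbb{Q}[x]$ of degree $n$, nondeterministically produces a candidate root and then deterministically confirms it in time polynomial in the input length.

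First I would fix the input encoding. Write $p(x)=\sum_{i=0}^n a_i x^i$ with $a_i\in\mathbb{Q}$ and let $N$ denote the total bit length of the tuple $(n;a_0,\ldots,a_n)$. Next I describe the nondeterministic stage. By the fundamental theorem of algebra, $p$ has exactly $n$ roots (with multiplicity) in $\overline{\mathbb{Q}}$; the $ntm$ guesses a certificate $r$ encoding one such root, drawing on the operator vocabulary permitted by the preceding definition of an $NDTM$ algorithm of polynomials (addition, subtraction, multiplication, division, the power function, and the $n$-th root). Finally, the deterministic branch evaluates $p(r)$ by Horner's scheme, using $O(n)$ ring operations, each of bit cost polynomial in $N$ and in the size of $r$, and accepts iff $p(r)=0$.

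What remains is to argue that the certificate $r$ itself admits a representation of polynomial size, so the entire computation is polynomial in $N$. The main obstacle is precisely this representation question: an irrational algebraic root could a priori demand a long description. I would dispatch it in two layers. For rational roots, the rational-root theorem bounds numerator and denominator by divisors of the extreme coefficients, yielding a certificate of size polynomial in $N$. For the irrational case, I would appeal to the standard polynomial-size encoding of an algebraic number by its minimal polynomial together with an isolating rational interval; this matches the radical-expression vocabulary the paper already allows for $NDTM$ algorithms of polynomials, so the guessed certificate can indeed be evaluated by finitely many rational operations.

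Combining the nondeterministic guess with Horner verification then yields an $ntm$ satisfying $O(ntm\rightarrow dtm)=O(N^c)$, which places the root-finding problem in $\mathbb{NP}$ in the sense of the paper's own definition and prepares the ground for the subsequent reversibility argument that will identify the required $\mathbb{P}$-algorithm with the (nonexistent) general radical formula.
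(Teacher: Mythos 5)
Your proposal follows the same guess-and-check decomposition as the paper's own proof: nondeterministically produce a candidate root and then verify it by evaluating the polynomial. The substantive difference is that you actually carry out the step on which the claim hinges, namely that a root admits a certificate of size polynomial in the input length $N$. The paper argues only that the algebraic numbers are countable --- which is not a sufficient condition for membership in $\mathbb{NP}$ (countability of the candidate set would equally ``place'' undecidable search problems in $\mathbb{NP}$) --- and its Appendix~A ``direct representation'' $[a_n,\dots,a_0;i]$ encodes a root by the input polynomial together with an index, so the certificate carries no information beyond the input and the claimed $O(n^2)$ evaluation of $f(x)$ cannot actually be performed on it. Your two-layer treatment (rational-root theorem for rational roots; minimal polynomial plus isolating interval for irrational ones) is the standard and correct way to close exactly this hole, so your argument is strictly more rigorous than the one in the paper on the point that matters.

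Two details still need tightening. First, for an irrational $r$ given by a minimal polynomial $q$ and an isolating interval, you cannot literally run Horner's scheme ``at $r$'' over $\mathbb{Q}$; the verification should be phrased as checking $q\mid p$ by polynomial division (equivalently, reducing $p$ modulo $q$ and testing for the zero remainder), which is polynomial time and is what ``evaluating $p(r)$'' amounts to in that representation. Second, the remark that the minimal-polynomial encoding ``matches the radical-expression vocabulary'' is misleading: for quintics and above no radical expression for the roots exists, which is precisely the point the paper exploits later, so the certificate must not be assumed to be a radical expression --- though nothing in your argument actually depends on that remark. Finally, note that both you and the paper are placing a search problem in a class of decision problems; a fully precise statement would speak of an associated decision problem or of a functional analogue of $\mathbb{NP}$.
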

\begin{proof}
	According to the definition of $\mathbb{NP}$, when the set of potential solutions is countable, and the verification of any solution has polynomial time-complexity, the problem is a $\mathbb{NP}$.
	Because all algebraic numbers are countable, the first condition satisfies. If we have obtained the representation of a root $x$, then compute the polynomial function $f(x)$, obviously, the time-complexity is $O(n^2)$ (counting the times of addition and multiplication).  Q.E.D.	
\end{proof}

\begin{remark}
The solution of Radical Roots of polynomials over $\mathbb{Q}$ is a $\mathbb{NP}$.
\end{remark}

\begin{collary}\label{col.ndtm}
	According to the definition of $\mathbb{NP}$, there is a $NDTM$ which can be used to solve the roots of polynomials over $\mathbb{Q}$.
\end{collary}

\begin{theorem}\label{thm.exp}
	 There exists an exponential time-complexity, deterministic, general algorithm can be used to solve the roots of polynomials over $\mathbb{Q}$.
\end{theorem}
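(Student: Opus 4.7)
The plan is to combine Theorem~\ref{thm.np} (roots over $\mathbb{Q}$ form an $\mathbb{NP}$ problem) with the standard inclusion $\mathbb{NP} \subseteq EXP$ that is already invoked in the introduction. Concretely, I would first observe that the set of candidate solutions, namely the algebraic numbers, is countable, so each candidate can be assigned a finite binary description. Once a description of a candidate root $x$ is fixed, Theorem~\ref{thm.np} already guarantees that the verifier $f(x) \stackrel{?}{=} 0$ runs in polynomial time in the length of the description and of the input polynomial. Enumerating all candidates up to a bounded description length and running the polynomial-time verifier on each then yields a deterministic, general procedure whose total running time is bounded by an exponential in the input length.

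The key technical step is to make the enumeration effective and to cap it by a finite bound depending only on the input polynomial. For this I would use the standard Mignotte-type height bound: every complex root of a polynomial $p \in \mathbb{Q}[x]$ has an algebraic description (a minimal polynomial together with an isolating interval) whose bit-length is polynomial in the bit-length of $p$. I would then enumerate, in some canonical order, every integer polynomial whose degree and coefficient height are within these bounds, factor out candidate minimal polynomials, and for each candidate root test $f(x)=0$ by exact arithmetic in the number field $\mathbb{Q}[x]/(\text{minimal polynomial})$. Collecting every candidate that passes the test (and stopping once $\deg p$ roots have been found, as noted in the introduction) produces the full root set.

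To bound the time-complexity, I would count: the number of candidate descriptions up to the height bound is at most exponential in $|p|$, and each verification is polynomial in $|p|$ by Theorem~\ref{thm.np}; so the total runtime is $2^{\mathrm{poly}(|p|)}$, which is exponential. Determinism and generality are immediate from the construction: the enumeration order and the verifier are fixed once and for all, independent of $p$.

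The main obstacle I anticipate is not the high-level argument but justifying that the candidate space can be pruned to exponential rather than, say, the naive countable infinity, and that exact symbolic arithmetic in a number field of bounded degree and height stays within polynomial time per step. Both are standard in computer algebra (Mignotte bounds on root heights and efficient arithmetic in $\mathbb{Q}[x]/(m(x))$), so I would cite these rather than re-derive them. With that in hand, the theorem reduces to the already-asserted fact $\mathbb{NP} \subseteq EXP$ applied to Theorem~\ref{thm.np}, with the construction above serving as an explicit witness.
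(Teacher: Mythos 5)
Your proposal follows essentially the same route as the paper: the paper's proof is the one-line appeal to $\mathbb{NP} \subseteq EXP$ applied to Theorem~\ref{thm.np}, and its accompanying remark sketches exactly the enumerate-algebraic-numbers-and-verify procedure you describe. Your version is considerably more detailed (the Mignotte-type height bound to cap the enumeration at exponentially many candidates, and exact arithmetic in $\mathbb{Q}[x]/(m(x))$ for the verifier) and in fact supplies justification the paper omits, but it is the same argument.
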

\begin{proof}
	Because $\mathbb{NP} \subseteq EXP$, Q.E.D.
\end{proof}

\begin{remark}
	Actually, the $EXP$ can enumerate every element in algebraic number set, if it is a root for a certain polynomials, then return true. Notice that all the algebraic numbers, i.e., the roots of polynomials, are encoded by the polynomials and the count of solutions. That is, when we enumerate the algebraic numbers, we can not obtain any information on the roots, except that they are the solutions of which polynomials. 
\end{remark}

\begin{lemma}\label{lem.sp}
	There exists an deterministic polynomial time-complexity algorithm which can be used to solve the roots of some special polynomials over $\mathbb{Q}$.
\end{lemma}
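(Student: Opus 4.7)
The plan is to prove the lemma constructively by exhibiting explicit polynomial-time algorithms for several well-known special families of polynomials over $\mathbb{Q}$. Since the preceding definitions allow a polynomial-time algorithm of polynomials to use addition, subtraction, multiplication, division, and power operations (and for the radical variant, also $n$-th roots), I will rely on the classical closed-form radical formulas, each of which consists of a fixed, constant number of such operations applied to the input coefficients.

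First I would handle the family $x^n - a = 0$ for $a \in \mathbb{Q}$: the root is simply $\sqrt[n]{a}$, which is a single operation in the extended instruction set, so the algorithm runs in $O(1)$ arithmetic steps and is clearly polynomial in the bit length of $a$ and $n$. Next I would cite the quadratic formula for degree $2$, which uses a constant number of additions, multiplications, a division, and one square root, and is therefore $O(1)$ in the number of algebraic operations and polynomial in the coefficient bit lengths. Then I would invoke Cardano's formula for degree $3$ and Ferrari's formula for degree $4$; each is again a fixed symbolic expression in the coefficients involving only the allowed operations, so each instance is solved in constant arithmetic depth and polynomial bit-complexity. Taken together, these constitute at least one concrete deterministic polynomial-time algorithm for a nonempty class of polynomials over $\mathbb{Q}$, which is all the statement requires.

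To assemble the proof I would (i) define the special class $\mathcal{S}$ as, say, $\{x^n - a : a \in \mathbb{Q}, n \in \mathbb{N}\} \cup \{p \in \mathbb{Q}[x] : \deg p \le 4\}$; (ii) describe for each subclass the explicit radical formula as a straight-line program in the allowed operator set; (iii) verify that the total number of arithmetic operations is bounded by a constant and that the intermediate bit sizes stay polynomial in the input size, so the overall algorithm is in $\mathbb{P}$ in the model used throughout the paper; and (iv) conclude by the definition of $\mathbb{P}$ that such a deterministic polynomial-time algorithm exists on $\mathcal{S}$.

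The only nontrivial point, and the step I expect to require the most care, is the bit-complexity bookkeeping for the degree-$3$ and degree-$4$ cases: Cardano's and Ferrari's formulas involve nested radicals whose symbolic representation is short but whose evaluated bit sizes must be shown to remain polynomial in the input. I would address this by keeping the roots in symbolic radical form rather than evaluating them numerically, so that the output is a syntactic expression whose length is bounded by a constant multiple of the input coefficient lengths. This sidesteps any blow-up from numerical evaluation and makes the polynomial-time claim immediate, completing the proof.
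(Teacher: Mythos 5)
Your proposal matches the paper's proof: the paper also exhibits $x^n-2=0$ with root $\sqrt[n]{2}$ and appeals to the existence of deterministic formulas for polynomials of degree less than $5$. Your version is more careful (explicit Cardano/Ferrari citations and the bit-complexity bookkeeping via symbolic radical output), but it is essentially the same argument.
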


\begin{proof}
	Let the equation is $x^n-2=0$. The radical solution is $x=\sqrt[n]{2}$. Moreover, when the order of polynomials is less than 5, there exists a deterministic algorithm to obtain a solution.  
\end{proof}

\begin{theorem}\label{thm.p.neq.np}
    There exists no general, deterministic, polynomial time-complexity algorithm to solve the radical roots of polynomials over $\mathbb{Q}$. $\mathbb{P} \neq \mathbb{NP}$. 
\end{theorem}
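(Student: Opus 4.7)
The plan is to proceed by contradiction: assume that some general, deterministic, polynomial time-complexity algorithm $\mathscr{P}$ computes the roots of every polynomial with rational coefficients. By Theorem \ref{thm.np} the problem lies in $\mathbb{NP}$, so such a $\mathscr{P}$ would be a genuine $\mathbb{P}$ witness for an $\mathbb{NP}$ problem. Without loss of generality I may take $\mathscr{P}$ to be non-zero, no-idle and total, since the zero algorithm is excluded by definition and idle instructions are detectable and can be excised; these are precisely the hypotheses needed to apply the reversibility results of the previous subsection.

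First I would apply Corollary \ref{col.pnp} to $\mathscr{P}$ to obtain a corresponding Oracle-$NDTM$ algorithm $\mathscr{Q} \in \mathbb{NP}$ whose running graph $G'$ is the two-sided extension $G \cup G^{-1}$ of the forward running graph $G$ of $\mathscr{P}$. The one-way execution of $\mathscr{P}$ makes $G$ a tree, so $G^{-1}$ is unambiguously defined, and $\mathscr{Q}$ represents a procedure that starts from the output $0$ and traces back through the edges of $G$ to recover the set of inputs, i.e., the roots.

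Next I would analyze the algebraic labels on the edges of $G$. By the definition of an exact algorithm for polynomials, every elementary step of $\mathscr{P}$ uses only $+,-,\times,\div$ or a power. Reversing an edge exchanges addition with subtraction, multiplication with division, and $x \mapsto x^n$ with $x \mapsto \sqrt[n]{x}$. Consequently the reverse graph $G^{-1}$ is an expression tree whose operators lie in $\{+,-,\times,\div,\sqrt[n]{\cdot}\}$, which is precisely the format of a radical formula. Since $\mathscr{P}$ is assumed general and deterministic, the induced formula must apply uniformly to every polynomial over $\mathbb{Q}$, giving a general radical formula. I would then invoke Galois' theorem (Abel--Ruffini): no such general radical formula exists for polynomials of degree $\geq 5$. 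This contradiction, together with $\mathbb{P} \subseteq \mathbb{NP}$, yields $\mathbb{P} \neq \mathbb{NP}$.

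The hard part will be the algebraic-labelling step. An arbitrary deterministic polynomial-time algorithm for root-finding need not consist of purely algebraic operations at the Turing-machine level; it is free to manipulate bit strings, branch on comparisons, maintain counters and tables, and interleave non-algebraic bookkeeping with the five allowed operators. Turning the running graph $G$ into a literal classical radical expression requires arguing that the non-algebraic machinery either cancels out on reversal or is forced by semantics to collapse into an algebraic composition of the five reversible operators. Closing this gap — promoting the isomorphism $G' \cong G \cup G^{-1}$ to an equality of bona fide algebraic formulas, so that Abel--Ruffini genuinely bites — is the step on which the whole reduction hinges, and where I expect any careful objection to be concentrated.
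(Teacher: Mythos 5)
Your proposal follows essentially the same route as the paper's own proof: assume a general, deterministic, polynomial-time algorithm $\mathscr{P}$, apply Corollary \ref{col.pnp} to obtain the reversed running graph $G^{-1}$, identify $G^{-1}$ with a radical formula, and invoke Galois (Abel--Ruffini) to conclude. The paper's version is in fact terser than yours: it simply asserts that ``$G^{-1}$ is the radical formula'' and then adds one closing step you omit, namely an appeal to Theorem \ref{thm.exp} to argue that, since a general deterministic exponential-time enumeration does exist, the impossibility must be charged specifically to the polynomial time bound rather than to determinism or generality.

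The gap you flag in your final paragraph is genuine, and the paper does not close it either. The identification of $G^{-1}$ with a classical radical expression is only licensed if $\mathscr{P}$ is an ``exact algorithm'' in the paper's restricted sense --- a straight-line algebraic computation over $\{+,-,\times,\div,(\cdot)^n\}$. The paper handles this by definitional fiat: it defines an exact algorithm of polynomials to contain only those five operations and relegates everything else to ``approximation algorithms'' that it declines to discuss. But a deterministic polynomial-time Turing machine that outputs a representation of the roots is under no obligation to be such an algebraic straight-line program: it may branch on bit comparisons, iterate, maintain counters, or emit roots in the paper's own direct representation $[a_n,\dots,a_0;k]$ from Appendix \ref{secA1}, in which case no radical expression is ever materialized and Abel--Ruffini says nothing about it. So the contradiction at best rules out polynomial-time algorithms within the restricted algebraic computation model, which is far short of $\mathbb{P} \neq \mathbb{NP}$. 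Your instinct that any careful objection will be concentrated on promoting the graph isomorphism to an identity of bona fide algebraic formulas is exactly right; neither your sketch nor the paper supplies that missing argument.
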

\begin{proof}
	$G^{-1}$ is the radical formula of polynomials with rational coefficients, when a $\mathbb{P}$ algorithm is applied to solve this polynomials.  
	According to Galois theory, there is no formula (a general algorithm) can solve the radical roots of polynomials over $\mathbb{Q}$.That is, there is no polynomial time-complexity $NDTM$ can solve the radical roots.
	According to  Corollary \ref{col.pnp}, based on the reverse theorem, there is no $DTM$. 
	 According  to Theorem \ref{thm.exp}, there exists a general, deterministic, exponential time-complexity algorithm can solve the roots of quintic or higher order polynomials over $\mathbb{Q}$. Therefore, the time-complexity should be set as polynomial time-complexity. Q.E.D.
\end{proof}

\begin{remark}
	Theorem \ref{thm.p.neq.np} actually is an impossible trinity, i.e., the generality, the accuracy (or the determinism ) and the economy ( or the efficiency, or polynomial time-complexity) of an algorithm.
\end{remark}

%%\begin{conjecture}
%	Intelligence have three important characters, i.e., the generality, the accuracy and the economy, and these characters 
%\end{conjecture}
	 
\section{Conclusions}
This paper discussed the radical roots of polynomials with rational coefficients. The solution of radical roots is a $\mathbb{NP}$, however, there is no general, deterministic, polynomial time-complexity algorithm to solve the higher order polynomial equations. This is an example of $\mathbb{P} \neq \mathbb{NP}$.

\begin{appendices}
	
\section{The Representation of Roots and $\mathbb{NP}$ Algorithm}\label{secA1}
	
The roots of polynomials with rational coefficients can be represented in various styles. For examples, the radical representation, the exponential function representation and the direct representation. The direct representation uses the rational coefficients and the order number to represent the roots. For example, the roots of $x^2 - 2 = 0$ can be represented as $[1,0,-2;1]$ and $[1,0,-2;2]$, i.e., $-\sqrt{2}$ and $\sqrt{2}$.

Under the direct representation, the $\mathbb{NP}$ and $\mathbb{EXP}$ algorithm can be easily obtained, i.e., enumerating all the algebraic numbers with the degree $n$. Here, the direct representation will not provide any additional information to the enumerating algorithms, but only the existence of roots.

\section{The Reversible Property of Turing Machine}
As we have known, when a Turing Machine is a total function $f$, for any given input $X_0$, the steps of $f_k(X_0)$ form a finite semi-group. That is, when $X_0$ is given, the reversible property of $f$  with the finite elements can be well-defined. 

According to the definition of Turing Machine, the $\delta$ function satisfies 

\[ \delta:(Q - \left\lbrace q_{accept}, q_{reject}\right\rbrace ) \times \Gamma \to Q \times \Gamma \times \left\lbrace -1, 1 \right\rbrace  \].

When $\delta ^{-1}$ can be multi-valued, or under the hypothesis that $\delta ^{-1}$ can foresee the configuration, there exists 

\[ \delta ^ {-1}: Q \times \Gamma \to (Q - \left\lbrace q_{accept}, q_{reject}\right\rbrace ) \times \Gamma \times \left\lbrace -1, 1 \right\rbrace  \].

When we confines Turing Machine to be a total function, every step of computation can be reversible with a certain input.
	
\end{appendices}

\bibliography{ref}% common bib file
%% if required, the content of .bbl file can be included here once bbl is generated
%%\input sn-article.bbl
\end{document}